\documentclass[final,13pt,times]{elsarticle}

\usepackage{algorithm}
\usepackage{algpseudocode}
\usepackage[utf8]{inputenc}
\usepackage{graphicx}
\usepackage{url}
\usepackage{comment}

\usepackage{amsthm}
\usepackage{amsmath,amssymb}
\usepackage{draftwatermark}
\usepackage{bbm}
\usepackage{blindtext}
\usepackage{hyperref}
\usepackage{multicol}
\usepackage{orcidlink}

\newtheorem{theorem}{Theorem}
\newtheorem{lemma}[theorem]{Lemma}
\newtheorem{corollary}{Corollary} 
\newcommand{\HSIC}{\mathrm{HSIC}}

\begin{document}

\begin{frontmatter}
\title{Testing multivariate normality by testing independence}
\author[label1]{Povilas Daniu\v{s}is, \orcidlink{0000-0001-5977-827X}}

\affiliation[label1]{organization={Vytautas Magnus University},
            addressline={Research Institute of Natural and Technological Sciences}, 
            city={Akademija, Kaunas distr.},
            postcode={53361},         
            country={Lithuania}}



\begin{abstract}
We propose a simple multivariate normality test based on Kac-Bernstein's characterization, which can be conducted by utilising existing statistical independence tests for sums and differences of data samples. We also perform its empirical investigation, which reveals that for high-dimensional data, the proposed approach may be more efficient than the alternative ones.  The accompanying code repository is provided at \url{https://shorturl.at/rtuy5}.

\end{abstract}

\begin{keyword}
Multivariate normality test \sep Kac-Bernstein characterization \sep Statistical independence \sep HSIC
\end{keyword}
\end{frontmatter}

\section{Introduction}
Normality assessment is a classical problem in statistics. Since marginal normality, in general, does not imply joint normality, multivariate normality tests cannot be conducted by performing a series of component-wise univariate normality testing (e.g., perturbed bivariate normal density 
$p(x,y) = \frac{1}{2\pi} e^{-\frac{1}{2} (x^2 + y^2)}(1 + xy e^{-\frac{1}{2}(x^2 + y^2)})$   is non-normal, however marginals $p(x)$ and $p(y)$ are normal). Similarly to the univariate case, these tests are usually constructed by directly investigating asymptotics of various multivariate statistics~\cite{ebner}. In this article, we attempt to take a different point of view. We rely on the equivalent normality characterization in terms of statistical independence, and utilise the existing statistical independence tests as the basis of our proposed multivariate normality testing procedure. We begin with the review of the related previous work in Section~\ref{sec:previous_work}. Afterward, we formulate the proposed independence-based multivariate normality test in Section~\ref{sec:proposed_method}. In Section~\ref{sec:simulations}, we conduct an empirical analysis of this test using a set of multivariate probability distributions, comparing it with the previous work. Finally, we conclude the article with Section~\ref{sec:conclusion}. The primary our contributions  are a.)
the first algorithmic application of the Kac-Bernstein theorem for assessing multivariate normality through statistical independence testing and b.) an empirical evidence showcasing the effectiveness of the proposed approach, comparing to Henze-Zirkler test~\cite{Henze}. 


\section{Previous work}
\label{sec:previous_work}
Let us briefly review the related previous work in multivariate normality and statistical independence testing.

\textbf{Multivariate normality tests}
Let us denote by $\mathcal{N}$ a normal distribution, and let $X$ be $d$-dimensional random vector, defined in some probability space $(\Omega_{X}, \mathcal{F}_{X}, \mathcal{P}_{X})$. Let $X^{1:n} := (x_{1},x_{2},...,x_{n})$ be data sample, consisting of $n$ i.i.d. realisations of $X$. Various approaches have been proposed to test the statistical hypothesis (e.g.~\cite{ebner, Henze, Arnastauskaite}): 

\begin{equation}
\label{eq:main_problem}
H_{0}: \mathcal{P}_{X} = \mathcal{N} \textnormal{ vs. }  H_{1}: \mathcal{P}_{X} \neq \mathcal{N}.
\end{equation}

\textbf{Henze-Zirkler (HZ) multivariate normality test} In our experiments (Section~\ref{sec:simulations}) we rely on Henze-Zirkler (HZ) ~\cite{Henze} test as a baseline.
Let $\bar{x}$ be the sample mean, and let $S := \frac{1}{n}\sum_{i=1}^{n}(x_{i} - \bar{x})(x_{i} - \bar{x})^{T}$ be empirical covariance. Let us denote $D_{i,j} := (x_{i} - x_{j})^{T}S^{-1}(x_{i} - x_{j})$, and $D_{i} := (x_{i} - \bar{x})^{T}S^{-1}(x_{i} - \bar{x})$. The Henze-Zirkler test statistic is given by~\cite{Henze}
\begin{equation}
\label{eq:hzstat}
HZ(h) := \frac{1}{n} \sum_{i,j=1}^{n} e^{-\frac{h^2}{2} D_{i,j}} - 2(1+h^2)^{-\frac{d}{2}}\sum_{i=1}^{n} e^{-\frac{h^2}{2(1+h^2)} D_{i}} + n(1+2h^{2})^{-\frac{d}{2}}.
\end{equation}

\noindent If the sample is normally distributed, Eq.~\eqref{eq:hzstat} is approximated by the log-normal distribution with the mean $1 - \frac{a^{-\frac{d}{2}}(1 + dh^{\frac{2}{a}} + d(d+2)h^4)}{2a^2}$, and the variance $2(1+4h^2)^{-\frac{d}{2}} + \frac{2a^{-d}(1+2dh^4)}{a^2} + \frac{3d(d+2)h^8}{4a^4} - 4w_{h}^{-\frac{d}{2}} (1 + \frac{3dh^4}{2w_{h}} + \frac{d(d+2)h^8}{2w_{h}^2})$,
where $a = 1+2h^{2}$, and $w_{h} = (1+h^2)(1+3h^3)$ (see~\cite{Henze}).
Henze and Zirkler recommend the following choice of $h$ by $h^{*} = \frac{1}{\sqrt{2}}\left(\frac{n(2d+1)}{4}\right)^{\frac{1}{d+4}}$ (Section 2 of~\cite{Henze}).



\noindent \textbf{Statistical independence tests} 
Let $(X,Y)$ be a random vector, defined in probability space $(\Omega_{X,Y}, \mathcal{F}_{X,Y}, \mathcal{P}_{X,Y})$, and let $(X^{1:n};Y^{1:n}) := ( (x_{1}, y_{1}),...,(x_{n},y_{n}) )$ be a sample, consisting of $n$ i.i.d. it's realisations. Let us consider statistical hypothesis:
\begin{equation}
\label{eq:independence_hypothesis}
H_{0}: \mathcal{P}_{X,Y} = \mathcal{P}_{X}\mathcal{P}_{Y} \textnormal{ vs. }  H_{1}: \mathcal{P}_{X,Y} \neq \mathcal{P}_{X}\mathcal{P}_{Y}.
\end{equation}
\noindent \textbf{HSIC-based statistical independence test} In order to test the hypothesis $H_0$ against the alternative $H_1$ (Eq.~\eqref{eq:independence_hypothesis}) we will further utilise kernel-based test~\cite{hsic}. Having two positive definite kernels $k(.,.)$ and $l(.,.)$~\cite{learning_with_kernels}, this test relies on the asymptotics of HSIC dependence measure estimator~\cite{hsic}:

\begin{equation}
\label{eq:hsic_estimator}
    \HSIC_{b}(X^{1:n};Y^{1:n}) := \frac{1}{n^2}Tr(KHLH),
\end{equation}

\noindent where $Tr(.)$ is matrix trace operator, $K$ and $L$ are corresponding Gram matrices with entries $K_{i,j}=k(x_{i},x_{j})$, and $L_{i,j}=l(y_{i},y_{j})$, respectively, and $H = I - n^{-1}11^{T}$ is the centering matrix.
In the case of the null hypothesis (Eq.~\eqref{eq:independence_hypothesis}), the distribution of $n \HSIC_{b}(X^{1:n};Y^{1:n})$ is well approximated by the gamma distribution with probability density $p_{gamma}(x) = \frac{x^{u - 1 } e^{-x/v}}{v^{u}\Gamma(v)} $, where parameters $u = \frac{(\mathbb{E} \HSIC_{b}(Z))^2}{Var \HSIC_{b}(Z)}$ and $v = \frac{n Var \HSIC_{b}(Z)}{ \mathbb{E} \HSIC_{b}(Z)}$ are defined only in terms of $K$ and $L$ (see~\cite{hsic} for details).

Therefore, given the sample $(X^{1:n};Y^{1:n})$, and a specified type I error probability $\alpha$, the $H_{0}$ in ~\eqref{eq:independence_hypothesis} is accepted when $n \HSIC_{b}(X^{1:n};Y^{1:n}) \leq Q_{1-\alpha}(X^{1:n};Y^{1:n})$, where $Q_{1-\alpha}(X^{1:n};Y^{1:n})$ represents the $1-\alpha$-quantile of corresponding gamma distribution. Let us define the corresponding indicator variable:

\begin{equation}
\label{eq:statistical_independence_test}
T_{\alpha}(X^{1:n};Y^{1:n}) :=   
\begin{cases}
  0 & \text{if } n \HSIC_{b}(X^{1:n};Y^{1:n}) \leq Q_{1-\alpha}(X^{1:n};Y^{1:n}), \\
  1 & \text{otherwise}.
  \end{cases}
\end{equation}


\section{Proposed multivariate normality test}
\label{sec:proposed_method}
In~\cite{Valderrama} authors provide the multivariate normality test, relying on statistical independence assessment. However, their work bases on Darmois-Skitovich theorem~\cite{skitovivc1962linear}, which provides only sufficient normality conditions. Moreover, their method requires performing the principal component analysis (PCA,~\cite{pearson1901liii}), and testing for their mutual independence, among other conditions, which limits its applicability for the higher-dimensional data. On the contrary, we apply Kac-Bernstein theorem~\cite{Kac, ref1}, which characterizes normal distribution in terms of the independence structure of four random vectors. Let us further denote the independence of random vectors $X$ and $Y$ by $X \perp Y$.

\begin{theorem}[Kac-Bernstein]
\label{thm:KacBernstein}
Let $X \perp Y$ be two random vectors. If $X-Y \perp X + Y$, then  $X$ and $Y$ are normal.
\end{theorem}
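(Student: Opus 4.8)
The plan is to translate the two independence hypotheses into a functional equation for characteristic functions and solve it, after first cutting the dimension down to one. First I would reduce to the scalar case by the Cramér--Wold device. Fix a direction $\theta\in\mathbb{R}^d$ and put $X_\theta:=\langle\theta,X\rangle$ and $Y_\theta:=\langle\theta,Y\rangle$. Because $X\perp Y$ and $X_\theta,Y_\theta$ are functions of $X,Y$ respectively, we still have $X_\theta\perp Y_\theta$; likewise $X_\theta+Y_\theta=\langle\theta,X+Y\rangle$ and $X_\theta-Y_\theta=\langle\theta,X-Y\rangle$ are independent because $X+Y\perp X-Y$. Thus $(X_\theta,Y_\theta)$ satisfies the hypotheses of Theorem~\ref{thm:KacBernstein} in dimension one. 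Once the scalar case is settled, every one-dimensional projection of $X$ (and of $Y$) is normal, and Cramér--Wold then forces $X$ and $Y$ to be (possibly degenerate) multivariate normal.

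For the scalar problem I would encode independence through characteristic functions $f(t):=\mathbb{E}\,e^{itX}$ and $g(t):=\mathbb{E}\,e^{itY}$. Independence of $X$ and $Y$ gives the joint characteristic function of $(X+Y,X-Y)$ as $\mathbb{E}\,e^{is(X+Y)+it(X-Y)}=f(s+t)g(s-t)$, while independence of $X+Y$ and $X-Y$ factorizes the same quantity into the product of its marginals, which (taking $t=0$ and $s=0$) equal $f(s)g(s)$ and $f(t)g(-t)$. Equating the two expressions yields
\begin{equation}
\label{eq:funeq}
f(s+t)\,g(s-t)=f(s)g(s)\,f(t)g(-t),\qquad s,t\in\mathbb{R}.
\end{equation}

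The main obstacle is to prove that $f$ and $g$ never vanish, since this is exactly what licenses taking logarithms in \eqref{eq:funeq}. Continuity and $f(0)=g(0)=1$ give non-vanishing on a neighbourhood of the origin; the delicate step is to propagate this to all of $\mathbb{R}$ using \eqref{eq:funeq} together with the a priori bound $|f|,|g|\le1$, in the spirit of Cramér's argument. Granting non-vanishing, set $\Phi:=\log f$ and $\Gamma:=\log g$ with $\Phi(0)=\Gamma(0)=0$, so that \eqref{eq:funeq} becomes
\begin{equation}
\label{eq:logeq}
\Phi(s+t)+\Gamma(s-t)=\Phi(s)+\Gamma(s)+\Phi(t)+\Gamma(-t).
\end{equation}

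To finish I would exploit that the right-hand side of \eqref{eq:logeq} splits as a function of $s$ plus a function of $t$. Writing $\Delta^2_{h,k}F(x):=F(x+h+k)-F(x+h)-F(x+k)+F(x)$ for the mixed second difference, applying $\Delta^2$ in both $s$ and $t$ annihilates that right-hand side and leaves $\Delta^2_{h,k}\Phi(s+t)+\Delta^2_{h,-k}\Gamma(s-t)=0$. Since $s+t$ and $s-t$ range independently, each second difference is forced to be independent of its base point, which makes the continuous functions $\Phi$ and $\Gamma$ polynomials of degree at most two. As $\Phi$ is a log-characteristic function with $\Phi(0)=0$, it must have the form $\Phi(t)=i\mu t-\tfrac12\sigma^2 t^2$; that is, $f$ is the characteristic function of $\mathcal{N}(\mu,\sigma^2)$, and the same holds for $g$. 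Hence $X_\theta$ and $Y_\theta$ are normal, which by the reduction above proves the theorem.
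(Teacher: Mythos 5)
The paper itself offers no proof of Theorem~\ref{thm:KacBernstein}: it is quoted as a classical result, and the only proof appearing in the paper (that of Corollary~\ref{thm:KacBernstein1}) handles the \emph{opposite} implication, namely that normality of $X$ and $Y$ forces $X-Y \perp X+Y$. So your attempt can only be measured against the classical argument, and what you have written is essentially that argument: Cram\'er--Wold reduction to dimension one (valid, since projections of independent vectors are independent, and a vector all of whose one-dimensional projections are normal is multivariate normal); the characteristic-function identity $f(s+t)\,g(s-t)=f(s)g(s)\,f(t)g(-t)$; non-vanishing; distinguished logarithms; and the observation that mixed second differences annihilate the additively split right-hand side, forcing $\log f$ and $\log g$ to be quadratic polynomials, which together with $|f|\le 1$ and the Hermitian symmetry $f(-t)=\overline{f(t)}$ pins down the form $e^{i\mu t-\frac{1}{2}\sigma^{2}t^{2}}$. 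All of these steps are sound, if compressed.

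The one genuine gap is the step you explicitly ``grant'': non-vanishing of $f$ and $g$. You flag it as the delicate, Cram\'er-style part and leave it unproven, yet the whole logarithm step collapses without it. Fortunately, in this problem it follows in two lines from your own functional equation, with no Cram\'er-type machinery. Put $t=s$: since $g(0)=1$ and $g(-s)=\overline{g(s)}$, you get $f(2s)=f(s)^{2}g(s)g(-s)=f(s)^{2}|g(s)|^{2}$. Put $t=-s$: likewise $g(2s)=f(s)f(-s)g(s)^{2}=|f(s)|^{2}g(s)^{2}$. Hence if $f$ and $g$ are both non-zero on $[-\delta,\delta]$ --- true for some $\delta>0$ by continuity and $f(0)=g(0)=1$ --- then both are non-zero on $[-2\delta,2\delta]$, and by induction on all of $\mathbb{R}$. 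With that inserted, the remaining compressed points (the $2\pi i$ ambiguity when passing to continuous logarithms, resolved because the discrepancy is a continuous $2\pi i\mathbb{Z}$-valued function vanishing at the origin, and the Cauchy-equation argument showing that base-point-independent second differences plus continuity force a quadratic) are standard, and your proof is complete and correct, covering even the case where $X$ and $Y$ are not identically distributed and the degenerate ($\sigma^{2}=0$) case.
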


\noindent We will utilise the following its variation:

\begin{corollary}
\label{thm:KacBernstein1}
Let $X \perp Y$ be two $d$-dimensional random vectors with the same probability distributions. Then $X-Y \perp X + Y$ if and only if $X$ and $Y$ are normal.
\end{corollary}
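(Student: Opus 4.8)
The plan is to prove Corollary~\ref{thm:KacBernstein1} as a biconditional, handling the two implications separately. The forward direction ($X-Y \perp X+Y \implies$ normality) is essentially immediate from the Kac-Bernstein theorem: since $X$ and $Y$ are assumed independent, Theorem~\ref{thm:KacBernstein} applies directly and yields that $X$ and $Y$ are normal. The hypothesis that $X$ and $Y$ share the same distribution is not even needed for this direction, so I would simply invoke the theorem and move on.

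The reverse direction (normality $\implies X-Y \perp X+Y$) is where the real work lies, and this is where I expect the main obstacle to be. The approach I would take is to compute the joint characteristic function of the pair $(X-Y, X+Y)$ and show it factorizes into the product of the marginal characteristic functions of $X-Y$ and $X+Y$. Writing $\phi_{X-Y,X+Y}(\alpha,\beta) = \mathbb{E}\, e^{i\alpha^{T}(X-Y) + i\beta^{T}(X+Y)}$ with $\alpha,\beta \in \mathbb{R}^{d}$, independence of $X$ and $Y$ lets me factor this as $\phi_{X}(\alpha+\beta)\,\phi_{Y}(\beta-\alpha)$. The key step is then to substitute the normal characteristic function $\phi(t) = \exp(i t^{T}\mu - \tfrac{1}{2} t^{T}\Sigma t)$ for both factors, using that $X$ and $Y$ have the \emph{same} distribution (hence the same $\mu$ and $\Sigma$); here the equal-distribution hypothesis is genuinely essential, since it forces the cross terms in the exponent to cancel.

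Carrying out the substitution, the linear parts of the exponent combine to give a mean contribution, and the crucial cancellation occurs in the quadratic part: the cross term $\alpha^{T}\Sigma\beta$ coming from $\phi_{X}(\alpha+\beta)$ is exactly cancelled by the corresponding term from $\phi_{Y}(\beta-\alpha)$ (with opposite sign, owing to the $\beta-\alpha$ argument). After collecting terms, the exponent separates additively into a function of $\alpha$ alone plus a function of $\beta$ alone, so the joint characteristic function factorizes as $\phi_{X-Y}(\alpha)\,\phi_{X+Y}(\beta)$. By uniqueness of characteristic functions, this factorization is equivalent to the independence $X-Y \perp X+Y$, completing the reverse direction.

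I would remark that the commented-out Lemma in the excerpt (computing that $(X-Y,X+Y)$ is jointly normal) essentially contains the computational heart of this direction; indeed the same identity $\phi_{X-Y,X+Y}(\alpha,\beta) = \phi_{X}(\alpha+\beta)\phi_{Y}(\beta-\alpha)$ is the starting point, and one reads off both joint normality and the product structure from the resulting Gaussian exponent. The only subtlety to be careful about is bookkeeping the quadratic form so that the cross terms demonstrably vanish under the equal-covariance assumption; this is routine linear algebra rather than a conceptual difficulty.
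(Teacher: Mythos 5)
Your proof is correct. The forward direction is handled exactly as in the paper (a direct appeal to Theorem~\ref{thm:KacBernstein}), and your observation that the equal-distribution hypothesis is not needed for that direction is accurate. For the reverse direction, you and the paper share the same starting identity, $\phi_{X-Y,X+Y}(\alpha,\beta)=\phi(\alpha+\beta)\,\phi(\beta-\alpha)$, but finish along different routes. You expand the Gaussian exponents explicitly, note that the cross terms $\pm\alpha^{T}\Sigma\beta$ cancel (this is precisely where equality of the two covariance matrices is used), and conclude that the joint characteristic function equals $\phi_{X-Y}(\alpha)\,\phi_{X+Y}(\beta)$, with $\phi_{X-Y}(\alpha)=e^{-\alpha^{T}\Sigma\alpha}$ and $\phi_{X+Y}(\beta)=e^{2i\beta^{T}\mu-\beta^{T}\Sigma\beta}$; independence then follows from the factorization criterion for characteristic functions. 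The paper instead stops at the observation that $\phi(\beta+\alpha)\phi(\beta-\alpha)$ is the characteristic function of a jointly normal vector, so $(X-Y,X+Y)$ is jointly Gaussian, and then reduces independence to uncorrelatedness, verifying that the cross-covariance $\mathbb{E}(X-Y)(X+Y)^{T}-\mathbb{E}(X-Y)\,\mathbb{E}(X+Y)^{T}$ vanishes under the i.i.d. assumption. Your argument is more elementary and self-contained, since it avoids the (standard but nontrivial) fact that uncorrelated jointly Gaussian vectors are independent, at the price of slightly more algebra; the paper's argument is shorter modulo that fact, and it localizes the use of the identical-distribution hypothesis in the covariance computation ($\mathbb{E}XX^{T}=\mathbb{E}YY^{T}$) rather than in the cancellation of quadratic cross terms. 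Both are valid proofs of the corollary.
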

\begin{proof}
The implication trivially follows from Theorem~\ref{thm:KacBernstein}. On the other side, let $X$ and $Y$  $\sim \mathcal{N}(\mu, \Sigma)$ be the random vectors with coinciding characteristic functions $\phi(\gamma) = e^{i \gamma^{T} \mu - \frac{1}{2} \gamma^{T}\Sigma \gamma}$, where $i^{2} = -1$, $\mu \in \mathbb{R}^d$ and $\Sigma \in \mathbb{R}^{d \times d}$ are the mean and covariance matrix. Then, because both $X$ and $Y$ are independent normal random vectors, $(X-Y,X+Y)$ are jointly normal, since the characteristic function 
\begin{equation}
\label{eq:cf}
    \phi_{X-Y,X+Y}(\alpha,\beta) := \mathbb{E}e^{i\alpha^{T}(X-Y)+i\beta^{T}(X+Y)} = \phi(\beta+\alpha)\phi(\beta - \alpha),
\end{equation}
\noindent where $\alpha,\beta \in \mathbb{R}^{d}$, is characteristic function of a normal random vector. Because of joint-normality, the independence of $X-Y$ and $X + Y$ follows from uncorelatedness, and $\Sigma_{X-Y,X+Y} = \mathbb{E} (X-Y)(X+Y)^{T} - \mathbb{E} (X-Y) \mathbb{E} (X+Y)^{T}$ is zero because of i.i.d. condition.
\end{proof}

\noindent Let $X^{1}$ and $X^{2}$ be two independent random vectors distributed according to possibly unknown probability distribution $P_{X}$. Since they satisfy conditions of Corollary~\ref{thm:KacBernstein1}, statistical hypothesis, defined in Eq.~\ref{eq:main_problem} can be  reformulated as

\begin{equation}
\label{eq:dual_problem}
H_{0}: X^{1}-X^{2} \perp X^{1} + X^{2}  \textnormal{ vs. } H_{1}:  X^{1}-X^{2} \not \perp X^{1} + X^{2},
\end{equation}

\noindent which can be equivalently tested using statistical independence tests (e.g.~\cite{hsic, Szekely}).  

\begin{algorithm}
\caption{Kac-Bernstein normality test}\label{alg:cap}
\begin{algorithmic}
\Require Statistical independence test $T_{\alpha}$, with significance level $\alpha \in (0,1)$, and i.i.d. samples $x_{1},...,x_{2n}$, samples from the distribution of interest.
\State Denote $X^{1,1:n} := x_{1},...,x_{n}$, and $X^{2,1:n} := x_{n+1},...,.x_{2n}$.
\State Return $T_{\alpha}(X^{1,1:n} - X^{2,1:n}, X^{1,1:n}  + X^{2,1:n})$.

\end{algorithmic}
\label{algo:kbtest_algo}
\end{algorithm}

Because of the equivalence relation between normality and independence in Corollary~\ref{thm:KacBernstein1}, the characteristics (e.g. critical values) of the proposed test (Algorithm~\ref{algo:kbtest_algo}) corresponds to that of given statistical independence test $T_{\alpha}$.

\section{Simulations}
\label{sec:simulations}
Let us empirically compare the proposed independence-based multivariate normality test (Algorithm~\ref{algo:kbtest_algo}) with Henze-Zirkler multivariate normality test using samples from a set of different normal and non-normal distributions of different sizes $n \in \{ 1000,2000,3000,4000,5000 \}$ and dimensionalities $d \in [50,150]$.

As the component of Algorithm~\ref{algo:kbtest_algo} we utilise HSIC-based~\cite{hsic} statistical independence test with a significance level set to $0.05$. In this test we utilise Gaussian kernel $k(x,y) = e^{-\frac{||x-y||^2}{2\sigma^{2}}}$, with bandwidth parameter $\sigma$ set using the median heuristics~\cite{Garreau2017LargeSA}.

\paragraph{Normal distributions} We analyse multivariate normal distributions:

$$\mathcal{N}(0,I), \mathcal{N}(0,C),\mathcal{N}(\mu,I),\mathcal{N}(\mu,C),$$ where $\mu$ is uniform random vector with components in $[-1,1]$, $C=UU^{T}$, and $U$ is random matrix with uniform entries. We define empirical significance level (see Eq. ~\eqref{eq:main_problem}):

\begin{equation}
\label{eq:empirical_power_normal}
\widehat{P(H_{1}|H_0)} := \frac{1}{n_{E}} \sum_{i=1}^{n_{E}} T_{\alpha}(X_{i}^{1,1:n}- X_{i}^{2,1:n}, X_{i}^{1,1:n}+ X_{i}^{2,1:n}),
\end{equation}
\noindent where $n_{E}$ is number of samples, and $X_{i}^{1,1:n}$, $X_{2}^{2,1:n}$ are as in Algorithm~\ref{algo:kbtest_algo} and have normal distribution.

\paragraph{Non-normal distributions} In this case, we analyse multivariate data with independent components distributed according to $13$ distributions, encompassing both symmetrical and asymmetrical ones: 

$$\chi^{2}(1),\chi^{2}(2), U[0,1], U[-1,1],Laplace(0,1),Logistic(0,1),  Logistic(0,2),$$
$$Power(2), Cauchy(0,1), Beta(5,5), Beta(8,2),$$ 
$$Beta(2,8), 0.5\mathcal{N}(0,1) + 0.5\mathcal{N}(0.5, 1).$$ 

\noindent We define empirical power:

\begin{equation}
\label{eq:empirical_power_normal}
\widehat{P(H_{1}|H_{1})} := \frac{1}{n_{E}} \sum_{i=1}^{n_{E}} T_{\alpha}(X_{i}^{1,1:n}- X_{i}^{2,1:n}, X_{i}^{1,1:n}+ X_{i}^{2,1:n}),
\end{equation}
\noindent where $n_{E}$ is number of samples, and $X_{i}^{1,1:n}$, $X_{2}^{2,1:n}$ are as in Algorithm~\ref{algo:kbtest_algo} and have non-normal distribution.

\begin{figure}[h]
    \centering
    \includegraphics[width=0.80\textwidth]{./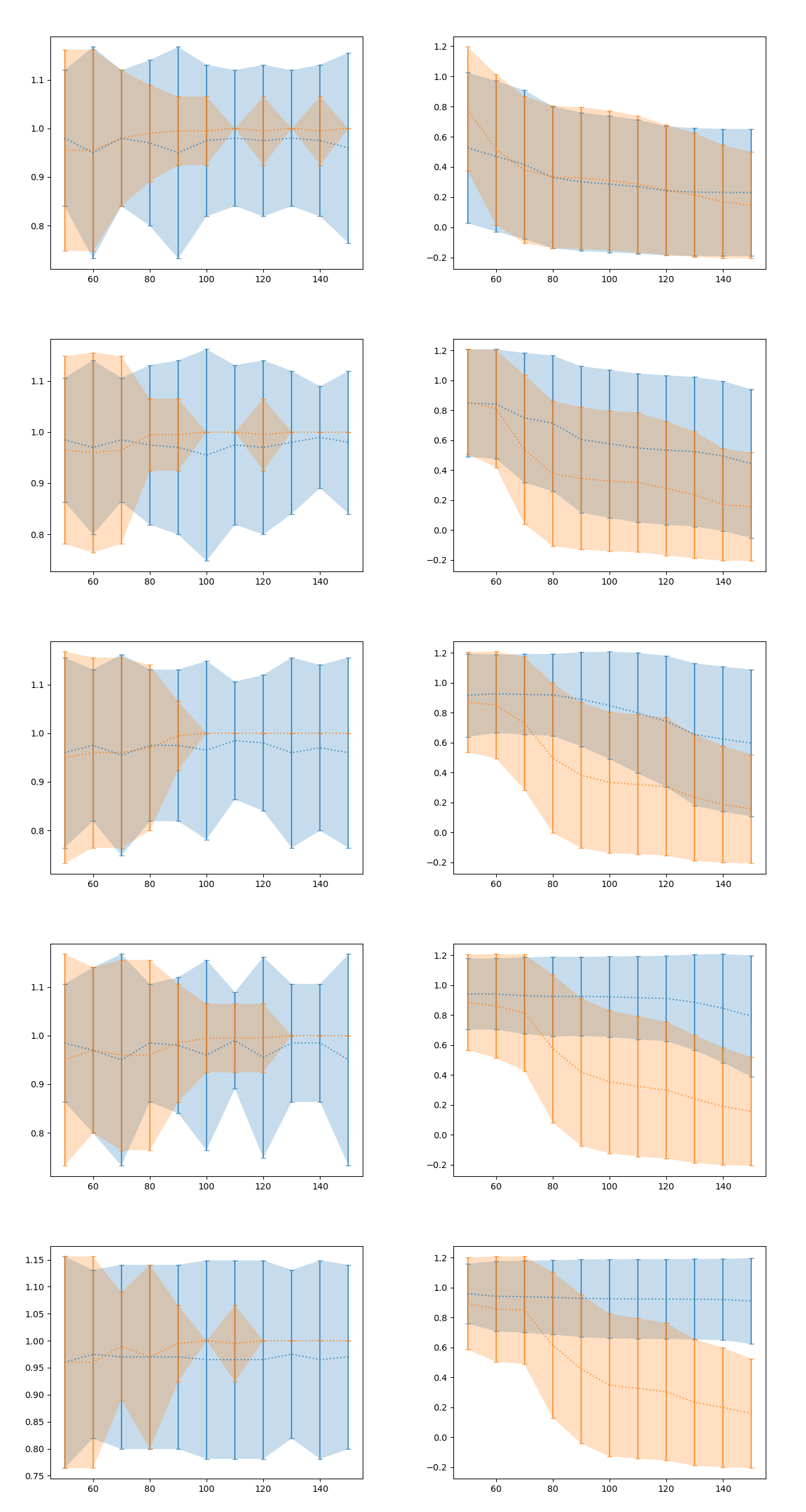}
    \caption{Left: data dimension ($x$ axis) versus average $1 - \widehat{P(H_{1}|H_{0})}$ ($y$ axis) for normal data. Right: data dimension ($x$ axis) versus average $\widehat{P(H_{1}|H_{1})}$ ($y$ axis) for non-normal data.    
    Orange graphs correspond to the Henze-Zirkler test, and the blue ones to the proposed normality test (Algorithm~\ref{algo:kbtest_algo}). From top to bottom, the graphs are for different data sizes $n \in \{ 1000,2000,3000,4000,5000 \}$.}
    \label{fig:empirical_power}
\end{figure}

In order to assess statistical significance, we repeat the experiments for each dimensionality $n_{E} = 50$ times providing corresponding average test performances and standard deviations in Fig.~\ref{fig:empirical_power}, averaged according to all distributions. Fig.~\ref{fig:empirical_power} show that for larger data cases, the average performance of the Henze-Zirkler test declines, while that of the suggested independence-based approach (Algorithm ~\ref{algo:kbtest_algo}) is more efficient with the increasing dimensionality of data. We also conducted experiments with smaller dimensionalities $d < 50$, in which the aforementioned advantage of the suggested independence-based approach was not observed.


\section{Conclusion}
\label{sec:conclusion}
In this short study, we proposed a new multivariate normality test, based on Kac-Bernstein characterization theorem, and provided its empirical investigation using a set of 
multivariate probability distributions. A notable property of our test is its equivalence to statistical independence assessment. Thereby, with the improving statistical independence testing methods, one can directly obtain more efficient multivariate normality tests.
Our experiments reveal that for larger dimensions our testing procedure was more efficient than Henze-Zirkler test, for larger data samples (e.g. $n \geq 2000$), although for lower sample sizes (e.g. $n = 1000$) both tests performed similarly.  
Besides more detailed empirical investigation, future work may include the construction of multivariate normality measures, which rely on the statistical dependence measure. For example, having two $i.i.d.$ random vectors $X$ and $X'$, and some statistical dependence measure $Dep(X,X')$~\cite{hsic, daniusis2022measuring}, one may estimate normality of their distribution via $Dep(X - X',X +X')$. Such normality measures may be useful e.g. in maximum entropy regularization and causal inference~\cite{JMLR:v17:14-375}.


\section{Acknowledgements}
We thank  Neurotechnology and Vytautas Magnus University for partial support, professors Vilijandas Bagdonavi\v{c}ius, Virginijus Marcinkevi\v{c}ius, Pranas Vaitkus, and other colleagues for useful comments.

\bibliographystyle{unsrt}

{\footnotesize
\bibliography{bibliography}
}

\end{document}